\documentclass[a4paper,fleqn]{cas-sc}

\usepackage[numbers]{natbib}
\usepackage{algorithm}
\usepackage{algorithmic}
\newtheorem{proposition}{Proposition}
\newtheorem{proof}{Proof}

\newcommand{\STATEX}{\item[]}

\begin{document}
\let\WriteBookmarks\relax
\def\floatpagepagefraction{1}
\def\textpagefraction{.001}
\shorttitle{FALQON with Low Depth and Measurement}
\shortauthors{Zi-Wen Huang et~al.}

\title [mode = title]{Feedback-based quantum optimization with low depth and measurement}                      

\affiliation[1]{organization={State Key Laboratory of Networking and Switching Technology, Beijing University of Posts and Telecommunications},
                city={Beijing},
                postcode={100876},
                country={China}}

\affiliation[2]{organization={State Key Laboratory of Cryptology, P. O. Box 5159},
                city={Beijing},
                postcode={100878},
                country={China}}

\affiliation[3]{organization={China Electronic Technology Cyber Security Co., Ltd.},
                city={Chengdu},
                postcode={610041},
                country={China}}

\affiliation[4]{organization={National Key Laboratory of Security Communication, Institute of Southwestern Communication},
                city={Chengdu},
                postcode={610041},
                country={China}}

\affiliation[5]{organization={School of Cyberspace Security, Beijing University of Posts and Telecommunications},
                city={Beijing},
                postcode={100876},
                country={China}}

\author[1,2,5]{Zi-Wen Huang}[orcid=0009-0003-6009-5726]
\credit{Conceptualization, Methodology, Software, Formal analysis, Investigation, Data curation, Visualization, Writing – original draft, Writing – review \& editing}

\author[1,5]{Jia-Cheng Fan}[orcid=0009-0003-5758-019X]
\credit{Methodology, Software, Formal analysis, Writing – review \& editing}

\author[1,5]{Xiao-Hui Ni}[orcid=0009-0004-1713-1418]
\credit{Investigation, Validation, Visualization, Writing – review \& editing}

\author[1,5]{Su-Juan Qin}[orcid=0000-0002-6405-6711]
\credit{Data curation, Supervision, Writing – review \& editing}

\author[3,4]{Xiao-Kai Hou}[orcid=0000-0002-2869-3828]
\credit{Validation, Writing – review \& editing}

\author[4]{Wei Huang}[orcid=0000-0003-0586-090X]
\cormark[1]
\ead{huangwei096505@aliyun.com}
\credit{Resources, Project administration, Funding acquisition}

\author[4]{Bing-Jie Xu}
\credit{Resources, Supervision}

\author[1,2,5]{Fei Gao}[orcid=0000-0002-1546-4364]
\cormark[2]
\ead{gaof@bupt.edu.cn}
\credit{Conceptualization, Supervision, Project administration, Funding acquisition, Writing – review \& editing}

\cortext[cor1]{Corresponding author}
\cortext[cor2]{Principal corresponding author}

\begin{abstract}
Feedback-based ALgorithm for Quantum OptimizatioN (FALQON) is a hybrid quantum-classical algorithm for solving combinatorial optimization problems, which circumvents classical parameter optimization but requires a deep quantum circuit.
To reduce circuit depth, Arai \textit{et al.} proposed second-order FALQON (SO-FALQON), achieving the best depth reduction among existing approaches.
However, SO-FALQON brings a 2.3 times per-step measurement overhead, as it needs to calculate an additional second-order control coefficient.
In this paper, inspired by the Backtracking Line Search (BLS) theory, we propose another method called BLS-FALQON, which not only reduces circuit depth to a comparable extent, but also achieves fewer measurements than SO-FALQON.
Numerical simulations on max-cut problem with 8 to 20 vertices demonstrate that BLS-FALQON reduces the total measurement count by 37.7\% compared to SO-FALQON, while maintaining a comparable circuit depth.
Furthermore, we conduct real quantum hardware experiments on the \textit{Tianyan-176} quantum computer, which uses the \textit{zuchongzhi2} superconducting quantum processor, confirming that BLS-FALQON remains effective under real quantum hardware conditions.
\end{abstract}



\begin{keywords}
Hybird quantum-classical algorithm \sep Feedback-based quantum optimization \sep Backtracking line search \sep Quantum resource reduction \sep Real quantum hardware experiments
\end{keywords}

\maketitle

\section{Introduction}

Feedback-based algorithm for quantum optimization (FALQON)~\citep{magann2022feedback,magann2022lyapunov} is a hybrid quantum-classical algorithm inspired by quantum Lyapunov control~\citep{kosloff1992excitation,ohtsuki1998application,j1999laser,grivopoulos2003lyapunov,engel2009local}. Unlike conventional variational quantum algorithms (VQAs)~\citep{peruzzo2014variational,farhi2014quantum,liu2021variational,bravo2023variational,cerezo2021variational}, it circumvents the need for any classical optimization effort, thereby avoiding the inherent difficulty of training variational parameters~\citep{bittel2021training}. However, FALQON requires multiple iterative steps where each iteration appends an additional layer of Hamiltonian evolution to the quantum state, which inherently results in a deep quantum circuit. Minimizing quantum resource consumption, such as circuit depth and measurement count, remains a critical challenge across diverse practical quantum applications, which encompass combinatorial optimization~\citep{zhou2020quantum,harrigan2021quantum,kruger2025out} and quantum cryptanalysis~\citep{regev2025efficient,ragavan2024space,xiao2026offline}. Consequently, the deep circuit required by FALQON severely hinders its practical applications on noisy intermediate-scale quantum (NISQ) devices~\citep{preskill2018quantum,li2025efficient,su2025topology}.

The circuit depth of FALQON is governed by its discrete time step $\Delta t$, because a larger $\Delta t$ always converges with fewer layers for a fixed target solution quality. However, increasing $\Delta t$ amplifies the Trotter decomposition errors introduced at each layer. These errors can misdirect the feedback signal and cause the problem Hamiltonian's expectation value to oscillate rather than decrease monotonically, which ultimately prevents convergence. Consequently, the original FALQON is restricted to a small $\Delta t$, resulting in prohibitive circuit depth~\cite{magann2022lyapunov,arai2025scalable}.

To reduce circuit depth, Arai \textit{et al.} proposed a variant of FALQON, named second-order FALQON (SO-FALQON)~\cite{arai2025scalable}. SO-FALQON improves the feedback law of the original ones which depends on the first-order Taylor expansion of the cost function difference per step to a second-order version. This improvement allows SO-FALQON to use a larger discrete time step $\Delta t$ for quantum evolution while avoiding the oscillations caused by Trotter decomposition errors that could lead to non-convergence. Therefore, SO-FALQON can reduce circuit depth compared to FALQON~\cite{arai2025scalable}, which achieves a larger depth reduction than other approaches in this metric~\cite{malla2024feedback,tang2025nonvariational,brady2025feedback,rattighieri2025accelerating}. However, the second-order feedback law needs to estimate two additional complex Hamiltonians at every layer. This estimation increases the per-step measurement overhead to approximately 2.3 times that of the original FALQON.

In this paper, inspired by the backtracking line search (BLS) theory~\cite{armijo1966minimization,bertsekas1997nonlinear,nocedal2006numerical}, we propose BLS-FALQON to achieve both low circuit depth and low measurement count. Unlike SO-FALQON, which introduces additional computational overhead by calculating second-order control coefficient $\beta^{\text{(SO)}}$ to reduce the circuit depth of FALQON, BLS-FALQON successfully maintains a large $\Delta t$ without oscillations and completely avoids the estimation of complex Hamiltonians. Our proposed approach incorporates a modified BLS strategy with a negative scaling factor to heuristically explore the original first-order control coefficient $\beta^{\text{(FO)}}$. Specifically, BLS-FALQON adjusts $\beta^{\text{(FO)}}$ solely when the expectation value of the problem Hamiltonian $\langle H_p\rangle$ fails to decrease monotonically, which serves as a simplified Armijo condition~\cite{armijo1966minimization} to guarantee convergence stability. In such cases, we iteratively multiply $\beta^{\text{(FO)}}$ by a negative factor $\tau\in(-1,0)$ until $\langle H_p\rangle$ successfully decreases. Consequently, this operation allows our algorithm to achieve both low circuit depth by maintaining a large $\Delta t$ without oscillations and low measurement overhead by circumventing the estimation of complex Hamiltonians required in the second-order variant.

Our algorithm is evaluated on max-cut problem, which is an NP-hard problem that serves as a common benchmark for most hybrid quantum-classical algorithms in combinatorial optimization~\cite{ni2024multilevel,ni2025progressive,ni2026adaptive,huang2026iterative}. We compare BLS-FALQON, SO-FALQON, and FALQON across two metrics using both numerical simulations and real quantum hardware experiments.

For numerical simulations, we use unweighted 3-regular graphs with 8 to 20 vertices, randomly generating 50 non-isomorphic instances per size for a total of 274 instances (except for sizes 8 and 10, which have only 5 and 19 non-isomorphic instances, respectively). Numerical results show that BLS-FALQON reduces the total circuit depth by approximately 90.9\% and measurement count by 88.1\% relative to FALQON. Furthermore, our proposed algorithm reduces the total measurement count by 37.7\% relative to SO-FALQON, while it maintains a comparable circuit depth to reach the target approximation ratio benchmark. This indicates that our algorithm achieves a significant reduction in both circuit depth and measurement resources.

For real quantum hardware experiments, we execute all three algorithms on the \textit{Tianyan-176} quantum computer, which uses the \textit{zuchongzhi2} superconducting quantum processor~\cite{wu2021strong,zhu2022quantum,cao2023generation,qian2025programmable}. Due to the limited coherence time of the processor, experiments are restricted to unweighted 3-regular graph instances at $n\in\{4,6,8\}$ with a maximum of $L=10$ circuit layers. Following the calculation method in Ref.~\cite{guerreschi2019qaoa}, we evaluate the practical execution time of our algorithm against FALQON and its second order variant. The results demonstrate that BLS-FALQON effectively reduces the estimated average hardware execution time by 43\% relative to SO-FALQON, while remaining effective and stable under realistic hardware noise. This confirms that our algorithm is compatible with current NISQ devices in the context of the small-scale max-cut problem.

The rest of this paper is organized as follows. In Sec.~\ref{sec:pre} we briefly review FALQON and its second-order variant. In Sec.~\ref{sec:bls} we present our proposed BLS-FALQON and detail its key components. Sec.~\ref{sec:simu} reports the results of our numerical simulations using the Qiskit framework. Sec.~\ref{sec:real} reports the results of our experiments on real quantum hardware. Finally, we provide our conclusions in Sec.~\ref{sec:conclu}.

\section{\label{sec:pre}Preliminaries}

In this section, we review some relevant preliminary knowledge to help readers better understand our work, including quantum Lyapunov control, FALQON, and its second-order variant.

\subsection{Quantum Lyapunov control}

FALQON is inspired by quantum Lyapunov control (QLC)~\cite{kosloff1992excitation,ohtsuki1998application,j1999laser,grivopoulos2003lyapunov,engel2009local},  a continuous-time control method. QLC drives a closed quantum system toward the ground state of a target problem Hamiltonian by applying a time-dependent Hamiltonian. To illustrate the principle of QLC, we consider a time-dependent Hamiltonian:
\begin{equation}
    H(t)=H_p+\beta(t)H_d.
\end{equation}
Here, $H_p$ is the problem Hamiltonian, $H_d$ is a fixed driving Hamiltonian that does not commute with $H_p$, and $\beta(t)$ is a real-valued control function. The dynamics of a quantum state $|\psi(t)\rangle$ follow the Schr\"{o}dinger equation:
\begin{equation}
    \label{eq:Schrodinger}
    i\hbar\frac{d}{dt}|\psi(t)\rangle=H(t)|\psi(t)\rangle.
\end{equation}
The core objective of QLC is to force the expectation value of the problem Hamiltonian, denoted as
\begin{equation}
    \langle H_p \rangle\equiv \langle\psi(t)|H_p|\psi(t)\rangle,
\end{equation}
to decrease monotonically over time.

Taking the time derivative of $\langle H_p \rangle$ via the product rule and noting that the partial derivative $\partial H_p / \partial t$ vanishes for a time-independent problem Hamiltonian, we substitute Eq.~\ref{eq:Schrodinger} into the remaining terms to obtain:
\begin{equation}
    \frac{d}{dt}\langle H_p \rangle = \frac{i}{\hbar}\langle\psi(t)|[H(t),H_p]|\psi(t)\rangle.
\end{equation}
Since $H_d$ does not commute with $H_p$ and $[H_p,H_p]=0$, the time derivative of $\langle H_p\rangle$ is finally given by:
\begin{equation}
    \frac{d}{dt}\langle H_p \rangle = \beta(t)\frac{i}{\hbar}\langle\psi(t)|[H_d,H_p]|\psi(t)\rangle=\beta(t) A(t),
\end{equation}
where $A(t)=\frac{i}{\hbar}\langle\psi(t)|[H_d,H_p]|\psi(t)\rangle$.

To guarantee a monotonic decrease requirement $\frac{d}{dt}\langle H_p \rangle\leq0$ for all $t\geq0$, QLC sets the control function $\beta(t)$ to ensure that the product $\beta(t) A(t)$ remains non-positive at all times. By adhering to this continuous control law, the system is steered toward the lowest energy eigenstate of $\langle H_p\rangle$. However, QLC cannot be directly implemented on gate-based quantum computers, because these devices cannot natively execute continuous-time evolution.

\subsection{FALQON}

To implement the continuous-time QLC on gate-based quantum computers, FALQON translates the time evolution into a layered quantum circuit via Trotter decomposition using a discrete time step $\Delta t$~\cite{magann2022feedback,magann2022lyapunov}. It begins with an empty quantum circuit and an input uniform superposition quantum state $|\psi_0\rangle={|+\rangle}^{\otimes n}$. In the $k$-th step, it appends a new layer consisting of the problem unitary $U_p$ and the driving unitary $U_d(\beta^{\text{(FO)}}_k)$ to the end of the circuit (to distinguish $\beta$ from second-order variant in the following text, we will use $\beta^{\text{(FO)}}$ here, meaning first-order). It defines these unitary operators as (setting $\hbar=1$)
\begin{equation}
    U_p=e^{-iH_p\Delta t},
\end{equation}
and
\begin{equation}
    U_d(\beta^{\text{(FO)}}_k)=e^{-i\beta^{\text{(FO)}}_kH_d\Delta t}.
\end{equation}
The output quantum state can be represented by
\begin{equation}
    |\psi_k\rangle=U_d(\beta^{\text{(FO)}}_k)U_p|\psi_{k-1}\rangle.
\end{equation}

After preparing $|\psi_k\rangle$ at the $k$-th step, FALQON computes the discrete counterpart of $A(t)$ in QLC by estimating an expectation value, which is defined as
\begin{equation}
    \label{eq:Ak}
    A_k=i\langle\psi_k|[H_d,H_p]|\psi_k\rangle.
\end{equation}
The algorithm then determines the control coefficient for the next layer as (initializing $\beta^{\text{(FO)}}_1=0$)
\begin{equation}
    \label{eq:bkAK}
    \beta^{\text{(FO)}}_{k+1}=-A_k.
\end{equation}
This iterative procedure of utilizing quantum measurement results to classically assign the control coefficient for the next layer constitutes the feedback law of FALQON as outlined in Algorithm~\ref{alg:falqon}.

\begin{algorithm}[H]
\caption{FALQON}
\label{alg:falqon}
\begin{algorithmic}[1]
    \STATEX \textbf{Input:} Problem Hamiltonian $H_p$; driving Hamiltonian $H_d$; time step $\Delta t$; total layers $L$.
    \STATEX \textbf{Output:} Bit string $\psi_L$.
    \STATE Initialize $k = 1$; $\beta^{\text{(FO)}}_1 = 0$; $|\psi_0\rangle = |+\rangle^{\otimes n}$.
    \WHILE{$k \le L$}
        \STATE Prepare quantum state $|\psi_k\rangle = U_d(\beta^{\text{(FO)}}_k)U_p|\psi_{k-1}\rangle$.
        \STATE Measure to estimate expectation $A_k = i\langle\psi_k|[H_d, H_p]|\psi_k\rangle$.
        \STATE Set the next control coefficient $\beta^{\text{(FO)}}_{k+1} = -A_k$.
        \STATE Proceed to the next layer $k \leftarrow k + 1$.
    \ENDWHILE
\RETURN Z basis measurements on $|\psi_L\rangle$.
\end{algorithmic}
\end{algorithm}

However, the Trotter decomposition errors force FALQON to use an extremely small step size to guarantee a monotonically decreasing $\langle H_p\rangle$, which inevitably leads to prohibitive circuit depth. Although the authors empirically chose a larger time step, denoted as $\Delta t_c$, the required quantum circuit depth is still very deep~\cite{magann2022lyapunov,arai2025scalable}.

\subsection{Second-order FALQON}

To reduce circuit depth, second-order FALQON extends the original feedback law~\cite{arai2025scalable}. It expands the difference in $\langle H_p\rangle$ between consecutive steps into a Taylor series with respect to $\Delta t$. Unlike the original feedback law that solely forces the first-order term to decrease, the extended feedback law determines the control coefficient by ensuring the sum of the first-order and second-order terms is strictly negative. This incorporation of second-order terms allows a monotonic decrease in $\langle H_p\rangle$ for a larger $\Delta t$ without causing oscillations.

To formulate this extended feedback law, the authors define two additional expectation values $B_k$ and $C_k$ as
\begin{equation}
    \label{eq:Bk}
    B_k=\langle\psi_k|\frac{1}{2}[[H_d,H_p],H_d]|\psi_k\rangle,
\end{equation}
and
\begin{equation}
    \label{eq:Ck}
    C_k=\langle\psi_k|[[H_d,H_p],H_p]|\psi_k\rangle.
\end{equation}
Based on these expectation values, the second-order control coefficient $\beta^{\text{(SO)}}_{k+1}$ is derived as (initializing $\beta^{\text{(SO)}}_1=0$)
\begin{equation}
    \label{eq:SO}
    \beta^{\text{(SO)}}_{k+1}=-\frac{A_k+\Delta tC_k}{2\Delta tB_k}.
\end{equation}
Notably, the authors empirically select whichever coefficient between $\beta^{\text{(FO)}}_{k+1}$ and $\beta^{\text{(SO)}}_{k+1}$ has the smaller absolute value to serve as the actual control coefficient for the next layer. Numerical simulations demonstrate that this selection strategy achieves faster convergence than relying solely on the second-order coefficient~\cite{arai2025scalable}. The steps of SO-FALQON are shown in Algorithm~\ref{alg:so_falqon}.

Calculating $\beta^{\text{(SO)}}_{k+1}$ requires measuring two additional expectation values $B_k$ and $C_k$ at each circuit layer, which consist of sums of multiple Pauli products. To minimize the measurement overhead, the authors group them into the minimal number of qubit-wise commuting subsets. This grouping task, which can be equivalently mapped to a graph coloring problem, is easily solved by the Welsh-Powell algorithm~\cite{welsh1967upper} approximately. Although Welsh-Powell algorithm reveals that the second-order coefficient requires 2.3 times more measurement bases per step on average, the reduced circuit depth resulting from a larger step size ensures that the measurement overhead to achieve the target solution quality remains lower than that of the FALQON.

\begin{algorithm}[H]
\caption{SO-FALQON}
\label{alg:so_falqon}
\begin{algorithmic}[1]
    \STATEX \textbf{Input:} Problem Hamiltonian $H_p$; driving Hamiltonian $H_d$; time step $\Delta t$; total layers $L$.
    \STATEX \textbf{Output:} Bit string $\psi_L$.
    \STATE Initialize $k = 1$; $\beta^{\text{(SO)}}_1 = 0$; $|\psi_0\rangle = |+\rangle^{\otimes n}$.
    \WHILE{$k \le L$}
        \STATE Prepare quantum state $|\psi_k\rangle = U_d(\beta^{\text{(SO)}}_k)U_p|\psi_{k-1}\rangle$.
        \STATE Measure to estimate expectations $A_k = i\langle\psi_k|[H_d, H_p]|\psi_k\rangle$, $B_k=\langle\psi_k|\frac{1}{2}[[H_d,H_p],H_d]|\psi_k\rangle$, and $C_k=\langle\psi_k|[[H_d,H_p],H_p]|\psi_k\rangle$.
        \IF{$|A_k|<|\frac{A_k+\Delta tC_k}{2\Delta tB_k}|$}
            \STATE Set the next control coefficient $\beta^{\text{(SO)}}_{k+1} = -A_k$.
        \ELSE
            \STATE Set the next control coefficient $\beta^{\text{(SO)}}_{k+1} = -\frac{A_k+\Delta tC_k}{2\Delta tB_k}$.
        \ENDIF
        \STATE Proceed to the next layer $k \leftarrow k + 1$.
    \ENDWHILE
\RETURN Z basis measurements on $|\psi_L\rangle$.
\end{algorithmic}
\end{algorithm}

\section{\label{sec:bls}Methodology}

In this section, we propose BLS-FALQON, a heuristic variant designed to reduce the measurement overhead relative to second-order variant while preserving the circuit depth reduction achieved by large time steps. Our approach discards the estimation of $B_k$ and $C_k$. 
We integrate the concept of backtracking line search to heuristically adjust the original first-order control coefficient $\beta^{\text{(FO)}}$. The algorithm dynamically scales $\beta^{\text{(FO)}}$ upon detecting an increase of problem Hamiltonian's expectation value. This strategy stabilizes the quantum evolution at a large step size while restricting the measurement requirements to $H_p$ and $[H_d,H_p]$.

\subsection{Backtracking Line Search FALQON}

In the following, we detail the execution procedure of BLS-FALQON. It begins with the same as FALQON. In addition to calculating $A_k$, BLS-FALQON also computes the problem Hamiltonian's expectation value at the $k$-th step, which is defined as
\begin{equation}
    \label{eq:Ek}
    E_k=\langle\psi_k|H_p|\psi_k\rangle.
\end{equation}
If $E_k\le E_{k-1}$, our approach assigns Eq.~\ref{eq:bkAK} and proceeds to the next layer. Conversely, the following backtracking mechanism is triggered. Physically, Trotter decomposition errors introduced by a large discrete time step may misdirect the control direction or overestimate the control strength. To counteract these potential misdirection, our approach applies a scaling factor to the current control coefficient. We write this iterative update rule as
\begin{equation}
    \beta^{\text{(FO)}}_{k}\leftarrow\tau\beta^{\text{(FO)}}_{k},
\end{equation}
where $\tau\in(-1,0)$. The negative sign reverses the potentially misdirected control operation, and the fractional magnitude concurrently reduces the control strength to suppress severe oscillations. BLS-FALQON repeatedly scales the parameter and re-calculates $E_k$ until $E_k\le E_{k-1}$. The complete algorithmic structure is summarized in Algorithm~\ref{alg:bls_falqon}.

We note that Algorithm~\ref{alg:bls_falqon} restructures the backtracking detection relative to the description above. Rather than measuring $E_k$ within step $k$ and conditionally measuring $A_k$ afterward, the inner while loop at the beginning of each step checks whether $E_{k-1}$ has increased relative to $E_{k-2}$ and corrects $\beta^{\text{(FO)}}_{k-1}$ if necessary. This restructuring allows $E_k$ and $A_k$ to be estimated simultaneously by using Welsh-Powell algorithm within each normal step, rather than separately.

\begin{algorithm}[H]
\caption{BLS-FALQON}
\label{alg:bls_falqon}
\begin{algorithmic}[1]
    \STATEX \textbf{Input:} Problem Hamiltonian $H_p$; driving Hamiltonian $H_d$; time step $\Delta t$; total layers $L$; scaling factor $\tau$.
    \STATEX \textbf{Output:} Bit string $\psi_L$
    \STATE Initialize $k=1$; $\beta^{\text{(FO)}}_1=0$; $E_{-1}=E_0=+\infty$; $|\psi_0\rangle=|+\rangle^{\otimes n}$.
    \WHILE{$k\le L$}
        \WHILE{$E_{k-1}>E_{k-2}$}
            \STATE Update $\beta^{\text{(FO)}}_{k-1} \leftarrow \tau \beta^{\text{(FO)}}_{k-1}$.
            \STATE Update $|\psi_{k-1}\rangle = U_d(\beta^{\text{(FO)}}_{k-1})U_p|\psi_{k-2}\rangle$.
            \STATE Measure to estimate expectation $E_{k-1} = \langle\psi_{k-1}|H_p|\psi_{k-1}\rangle$.
        \ENDWHILE
        \STATE Prepare quantum state $|\psi_k\rangle = U_d(\beta^{\text{(FO)}}_k)U_p|\psi_{k-1}\rangle$.
        \STATE Measure to estimate expectations $A_k = i\langle\psi_k|[H_d, H_p]|\psi_k\rangle$ and $E_k = \langle\psi_k|H_p|\psi_k\rangle$.
        \STATE Set the next control coefficient $\beta^{\text{(FO)}}_{k+1} = -A_k$.
        \STATE Proceed to the next layer $k \leftarrow k + 1$.
    \ENDWHILE
\RETURN Z basis measurements on $|\psi_L\rangle$.
\end{algorithmic}
\end{algorithm}

Although BLS-FALQON introduces additional measurements of the problem Hamiltonian during each normal step and backtracking, estimating $E_k$ is significantly less measurement cost than $B_k$ and $C_k$ required by SO-FALQON. It is worth mentioning that for combinatorial optimization problems whose $H_p$ consists of 2-local $ZZ$ Pauli strings, such as the max-cut problem, all of which can be absorbed into the existing qubit-wise commuting subsets of the $A_k$ grouping. For example, jointly estimating $E_k$ and $A_k$ incurs no additional measurement bases for $n\ge12$ 3-regular max-cut problem, as confirmed in Table~\ref{tab:nbasis}. Furthermore, by enabling a large time step, our approach circumvents the prohibitive circuit depth required by FALQON. Consequently, both the measurement overhead and the circuit depth are minimized compared to FALQON. We will verify this advantage in Sec.~\ref{sec:simu}.

\subsection{Theoretical Analysis}

After analyzing why BLS-FALQON can jointly reduce circuit depth and total measurement bases count, to further justify this heuristic approach, we then analyze its theoretical viability from two distinct perspectives.

\subsubsection{Analysis from error bounds}

We prove that the backtracking mechanism terminates within a finite number of steps. We adopt the same discrete time step as SO-FALQON, for which Ref.~\cite{arai2025scalable} has numerically verified that higher-order terms are negligible compared to the first- and second-order terms. Under this assumption, by applying the BCH formula~\cite{rossmann2006lie,brian2003lie}, the single-step energy difference $\Delta E(\beta)\equiv E_k(\beta)-E_{k-1}$ can be expanded as
\begin{equation}
    \label{eq:DeltaE}
    \Delta E(\beta)=\beta\Delta tA_k+\beta^2\Delta t^2B_k+\beta\Delta t^2C_k+\mathcal{O}(\Delta t^3),
\end{equation}
where $A_k$, $B_k$ and $C_k$ are defined in Eqs.~\ref{eq:Ak},~\ref{eq:Bk} and~\ref{eq:Ck}. We define $\beta_c$ as the largest $|\beta|$ satisfying
\begin{equation}
    |\beta\Delta tA_k|>|R(\beta,\Delta t)|,
\end{equation}
where $R$ collects all terms of order $\mathcal{O}(\Delta t^2)$ and higher. Therefore, whenever $\beta A_k<0$ and $|\beta|\le\beta_c$, we have $\Delta E(\beta)<0$.

Due to Trotter decomposition errors introduced by a larger $\Delta t$, the measured $A_k$ may carry the wrong sign relative to the true energy gradient, so Eq.~\ref{eq:bkAK} may itself point in the wrong descent direction. The following proposition shows that the alternating sign of $\tau\in(-1,0)$ guarantees that, regardless of this potential misdirection, the backtracking mechanism always contains a valid descent step.

\begin{proposition}
    Let $A_k\ne0$ and $\tau\in(-1,0)$. The backtracking sequence $\beta^{(m)} = \tau^m\beta^{\text{(FO)}}$ with $\beta^{\text{(FO)}}=-A_k$ satisfies $\Delta E(\beta^{(m^*)})<0$ for some finite index $m^*$ bounded by 
    \begin{equation}
        M=\left\lceil\log_{|\tau|}\frac{\beta_c}{|A_k|}\right\rceil+1.
    \end{equation}
\end{proposition}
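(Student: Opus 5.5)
The plan is to combine two facts. The magnitude $|\beta^{(m)}|=|\tau|^m|A_k|$ decays geometrically. The sign of $\beta^{(m)}$ alternates with $m$. Once the magnitude falls inside the "safe" window $|\beta|\le\beta_c$ defined before the proposition, one of any two consecutive indices gives a descent step. We take as given the characterization stated just before the proposition: whenever $\beta A_k<0$ and $|\beta|\le\beta_c$, we have $\Delta E(\beta)<0$. Here $A_k$ is understood as the true first-order coefficient in the expansion of Eq.~\ref{eq:DeltaE}. The possible sign misdirection of the measured value is exactly what the alternation handles. We also assume $\beta_c>0$, which holds because $|R(\beta,\Delta t)|=\mathcal{O}(\beta\Delta t^2)$ near $\beta=0$, so the strict inequality defining $\beta_c$ is satisfied for all sufficiently small $|\beta|$.

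\emph{Step 1 (magnitude).} Since $\beta^{(m)}=\tau^m\beta^{\text{(FO)}}$ and $\beta^{\text{(FO)}}=-A_k$, we have $|\beta^{(m)}|=|\tau|^m|A_k|$. Because $0<|\tau|<1$, this is strictly decreasing in $m$. Set $m_0=\lceil\log_{|\tau|}(\beta_c/|A_k|)\rceil$, and replace it by $0$ if $\beta_c\ge|A_k|$. Then $|\tau|^{m}|A_k|\le\beta_c$ for every $m\ge m_0$. This uses that $\log_{|\tau|}$ is decreasing, so $m\ge\log_{|\tau|}(\beta_c/|A_k|)$ is equivalent to $|\tau|^m\le\beta_c/|A_k|$.

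\emph{Step 2 (sign).} Since $\tau<0$, $\operatorname{sgn}(\beta^{(m)})=(-1)^m\operatorname{sgn}(-A_k)$. Hence $\beta^{(m)}A_k=-\tau^m A_k^2$, which is negative exactly for even $m$ and positive for odd $m$.

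This is the step where misdirection must be addressed. If the measured $A_k$ carries the wrong sign relative to the true gradient coefficient $\tilde A_k$ of Eq.~\ref{eq:DeltaE}, then the descent condition $\beta\tilde A_k<0$ holds instead for odd $m$. In either case, among the two consecutive indices $m_0$ and $m_0+1$, exactly one satisfies $\beta^{(m)}\tilde A_k<0$. By Step 1, both satisfy $|\beta^{(m)}|\le\beta_c$. We need $\tilde A_k\neq0$ for this sign argument, which is the role of the hypothesis $A_k\ne0$; I would state explicitly that it is interpreted for the true coefficient.

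\emph{Step 3 (conclude).} By the window property, the chosen index $m^*\in\{m_0,m_0+1\}$ gives $\Delta E(\beta^{(m^*)})<0$, and $m^*\le m_0+1=M$. Taking the first index at which $E_k\le E_{k-1}$ only makes $m^*$ smaller. The backtracking loop of Algorithm~\ref{alg:bls_falqon} therefore terminates after at most $M$ rescalings.

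The main obstacle is Step 2's bookkeeping between the measured and true $A_k$, together with the implicit assumption that $\beta_c$ describes a full window $(0,\beta_c]$ rather than just a largest point. I would handle the latter by defining $\beta_c$ as the supremum of radii $r$ such that the inequality holds for all $0<|\beta|\le r$. That monotone reading is what the sentence preceding the proposition uses.
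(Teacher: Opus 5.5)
Your argument is the paper's proof in compressed form. The paper's Case 1 (take the first even index $\ge m_0$) and Case 2 (take the first odd index $\ge m_0$) are exactly your observation that precisely one of $m_0, m_0+1$ has the right sign while both lie in the window. Your clamping of $m_0$ at $0$ is a genuine small improvement over the statement as written. If $\beta_c\ge|A_k|/|\tau|$, the stated $M$ is $\le 0$ and cannot bound an odd $m^*$, whereas your bound $\max(m_0,0)+1$ always works.

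The step that does not hold as written is your justification of $\beta_c>0$. With $R$ defined as in the paper (all terms of order $\Delta t^2$ and higher), $R$ contains $\beta\Delta t^2C_k$, which is \emph{linear} in $\beta$. Near $\beta=0$ you are therefore comparing two linear quantities. The inequality $|\beta\Delta t\tilde A_k|>|R|$ then reduces to roughly $\Delta t|C_k|<|\tilde A_k|$, a condition on $\Delta t$ that shrinking $|\beta|$ cannot enforce. Moreover, the Trotter-type misdirection the paper has in mind, where $A_k$ and $A_k+\Delta tC_k$ have opposite signs, forces $\Delta t|C_k|>|A_k|$. So in exactly that case the inequality defining $\beta_c$ fails for all small $|\beta|$ under the paper's definition.

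There are two ways to repair this. The consistent way to run your reinterpretation is to let $\tilde A_k$ be the full linear coefficient, $\tilde A_k\equiv\partial_\beta\Delta E(0)/\Delta t=A_k+\Delta tC_k+\mathcal{O}(\Delta t^2)$, and to set $R(\beta)\equiv\Delta E(\beta)-\beta\Delta t\tilde A_k$. Since $\Delta E(0)=0$ (because $U_p$ commutes with $H_p$) and $\Delta E$ is smooth in $\beta$, you get $R=\mathcal{O}(\beta^2)$. Then $\beta_c>0$ follows from $\tilde A_k\ne0$, and the rest of your proof is unchanged. Alternatively, simply state $\beta_c>0$ as a hypothesis, which is what the paper tacitly does.

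Finally, you need $A_k\ne0$ for the measured coefficient as well as $\tilde A_k\ne0$. Your Step 1 logarithm and the sign alternation in Step 2 both use the measured value. If it vanished, every $\beta^{(m)}$ would be $0$, giving $\Delta E=0$ rather than $\Delta E<0$.
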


\begin{proof}
    We consider following two cases.

    \textbf{Case 1: $\beta^{(FO)}$ is in the correct descent direction}. Since $|\tau|<1$, the magnitudes $|\beta^{(m)}|=|\tau|^m|A_k|$ decay geometrically. At the first even index $m^*\ge m_0=\lceil\log_{|\tau|}(\beta_c/|A_k|)\rceil$, we have $|\beta^{(m^*)}|\le\beta_c$ and the same sign as $\beta^{(FO)}$, so $\Delta E(\beta^{(m^*)})<0$.

    \textbf{Case 2: $\beta^{(FO)}$ is in the wrong descent direction}. Then $\tau\beta^{(FO)}$ (the $m=1$ term) has the opposite sign from $\beta^{(FO)}$, restoring the correct descent direction. The odd-indexed subsequence $\{\beta^{(1)},\beta^{(3)},\dots\}$ therefore all point in the correct direction. By the same magnitude decay argument, there exists a first odd index $m^*\le M$ such that $|\beta^{(m^*)}|\le\beta_c$, guaranteeing $\Delta E(\beta^{(m^*)}) < 0$.

    In both cases $m^*\le M$, completing the proof.
\end{proof}

Consequently, the negative $\tau$ plays two roles. The alternating sign ensures that every other trial corrects any misdirected control direction, while the decaying magnitude $|\tau|^m$ drives the sequence into the convergence domain $[-\beta_c, \beta_c]$. Note that evaluating Eq.~\ref{eq:Ek} during each backtracking can most likely share measurement bases with the $A_k$ Pauli grouping, so the average additional measurement overhead per circuit layer is at most $M$ evaluations.

\subsubsection{Analysis from quadratic landscapes}

We then analyze the backtracking mechanism within the quadratic approximation framework of SO-FALQON. Eq.~\ref{eq:DeltaE} can be written in quadratic form in $\beta$:
\begin{equation}
    \Delta E(\beta)\approx\beta\Delta t(A_k+\Delta tC_k)+\beta^2\Delta t^2B_k.
\end{equation}
Since $B_k>0$ always holds in our approach (consistent with Ref.~\cite{arai2025scalable}), this is an upward-opening parabola in $\beta$ with roots at $\beta=0$ and $\beta=-(A_k+\Delta tC_k)/(\Delta tB_k)$. The parabola is negative between the two roots and positive outside them. Therefore, $\Delta E(\beta)<0$ if and only if $\beta$ is between $0$ and $-(A_k+\Delta tC_k)/(\Delta tB_k)$, which requires simultaneously that $\beta$ and $(A_k+\Delta tC_k)$ have opposite signs and that $|\beta|<|A_k+\Delta tC_k|/(\Delta tB_k)$. We refer to these two requirements together as the descent condition.

The second-order control coefficient $\beta^{\text{(SO)}}$ (Eq.~\ref{eq:SO}) always satisfies the descent condition. By contrast, the first-order control coefficient $\beta^{\text{(FO)}}=-A_k$ may violate the descent condition.

\begin{proposition}
    \label{prop:2}
    When the descent condition is violated at $\beta^{\text{(FO)}}$, the backtracking sequence $\beta^{(m)} = \tau^m\beta^{\text{(FO)}}$ with $\tau\in(-1,0)$ satisfies the descent condition at $\beta^{(m^*)}$ for some
    \begin{equation}
        \label{eq:prop2}
        m^*\le\lfloor\log_{|\tau|}\frac{|A_k+\Delta tC_k|}{|\Delta tA_kB_k|}\rfloor + 2.
    \end{equation}
\end{proposition}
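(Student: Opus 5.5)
Let $D\equiv A_k+\Delta t C_k$ and $r\equiv |D|/(\Delta t B_k)$, using $B_k>0$. By the quadratic-landscape analysis above, $\beta$ satisfies the descent condition exactly when $\operatorname{sgn}(\beta)=-\operatorname{sgn}(D)$ and $0<|\beta|<r$. The backtracking sequence has $|\beta^{(m)}|=|\tau|^m|A_k|$ and $\operatorname{sgn}(\beta^{(m)})=(-1)^m\operatorname{sgn}(\beta^{\text{(FO)}})$. The plan is to find the first index at which the magnitude falls below $r$, and then use at most one further step to fix the sign. This mirrors the two cases in the proof of the previous proposition.

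\textbf{Step 1 (magnitude).} Define $m_0=\lfloor\log_{|\tau|}(r/|A_k|)\rfloor+1$. Note that $r/|A_k|=|A_k+\Delta tC_k|/|\Delta t A_kB_k|$, which is the quantity inside the logarithm in Eq.~\ref{eq:prop2}. Since $0<|\tau|<1$, the function $\log_{|\tau|}$ is decreasing. Hence $m_0>\log_{|\tau|}(r/|A_k|)$ gives $|\tau|^{m_0}<r/|A_k|$, that is, $|\beta^{(m_0)}|<r$, and the same strict bound holds for every $m\ge m_0$. If $m_0$ is negative, replace it by $\max(m_0,0)$; this does not affect the bound.

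\textbf{Step 2 (sign).} Consecutive terms of the sequence have opposite signs. So among $m_0$ and $m_0+1$ exactly one term has sign $-\operatorname{sgn}(D)$, and it also has magnitude below $r$ by Step 1. Call this index $m^*$. Then $\beta^{(m^*)}$ satisfies the descent condition, and $m^*\le m_0+1=\lfloor\log_{|\tau|}(r/|A_k|)\rfloor+2$, which is Eq.~\ref{eq:prop2}. The hypothesis that the condition fails at $m=0$ only serves to ensure that backtracking is actually triggered, so that $m^*\ge1$.

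\textbf{Expected obstacle.} The main care point is the degenerate data. If $A_k=0$ the sequence is identically zero and $\Delta E=0$, so the statement is vacuous; we assume $A_k\neq0$ as in the previous proposition. If $D=0$ then $r=0$, no $\beta$ satisfies the condition, and the logarithm is undefined; we must assume $D\neq0$. A second check is the strict versus non-strict inequality at the boundary $|\beta|=r$. Using the floor plus one gives a strict inequality, which is what the open interval in the descent condition requires.
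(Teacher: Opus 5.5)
\textbf{Verdict.} Your argument is the paper's argument. You take $m_0=\lfloor\log_{|\tau|}(r/|A_k|)\rfloor+1$, so that every $m\ge m_0$ meets the strict magnitude condition. You then use the sign alternation forced by $\tau<0$ to find a correctly signed term at $m_0$ or $m_0+1$. For $m_0\ge 0$ this is complete. Your handling of $A_k=0$, of $D=0$, and of the strict inequality is more careful than the paper's.

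\textbf{Where it fails.} The claim that replacing a negative $m_0$ by $\max(m_0,0)$ ``does not affect the bound'' is false, and here the statement itself breaks down.
\begin{itemize}
\item One has $m_0\le -1$ exactly when $|A_k|<|\tau|\,r$.
\item In that regime the magnitude condition already holds at $m=0$. By hypothesis, the violation at $\beta^{(\mathrm{FO})}$ is therefore of sign only, and the first valid index is $m^*=1$.
\item The stated bound, however, equals $m_0+1\le 0$, so no admissible index satisfies it.
\end{itemize}
A concrete instance: $\tau=-1/4$, $\Delta t=0.1$, $B_k=1$, $A_k=0.01$, $C_k=-1$. This gives $r=0.9$ and a valid $\beta^{(1)}=0.0025$, while the bound evaluates to $\lfloor\log_{1/4}90\rfloor+2=-2$. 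The paper's proof has the same blind spot: it asserts that the first productive index $\ge m_0$ is at most $m_0+1$ without using that indices start at $0$, and that index $0$ is excluded.

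\textbf{What you actually prove.} Your two steps establish $m^*\le\max(m_0,0)+1=\max\{\lfloor\log_{|\tau|}(|A_k+\Delta tC_k|/|\Delta tA_kB_k|)\rfloor+2,\;1\}$. This agrees with the stated bound precisely when $|A_k|\ge|\tau|\,r$. You should either state this corrected bound or add that hypothesis, rather than assert the original bound unconditionally.
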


\begin{proof}
    We address the sign condition and the magnitude condition separately.

    \textbf{Sign condition.} The descent condition requires $\beta$ and $-(A_k + \Delta t C_k)$ to have the same sign. Since $\tau < 0$, consecutive elements of the sequence alternate in sign. Therefore, either all even-indexed or all odd-indexed terms satisfy the sign condition.

    \textbf{Magnitude condition.} The descent condition further requires
    \begin{equation}
        |\beta^{(m)}|<\frac{|A_k+\Delta tC_k|}{\Delta tB_k}.
    \end{equation}
    Substituting $|\beta^{(m)}|=|\tau|^m|A_k|$, the above equation becomes
    \begin{equation}
        |\tau|^m|A_k|<\frac{|A_k+\Delta tC_k|}{\Delta tB_k}.
    \end{equation}
    Taking $\log_{|\tau|}$ of both sides (and reversing the inequality since $\log_{|\tau|}$ is decreasing) gives
    \begin{equation}
        m>\log_{|\tau|}\frac{|A_k+\Delta tC_k|}{|\Delta tA_kB_k|}.
    \end{equation}
    Hence the magnitude condition is satisfied for all
    \begin{equation}
        m\ge m_0\equiv\lfloor\log_{|\tau|}\frac{|A_k+\Delta tC_k|}{|\Delta tA_kB_k|}\rfloor+1.
    \end{equation}

    The productive subsequence covers either all even or all odd indices, so the first productive index $m^*$ satisfying $m^*\ge m_0$ is at most $m_0+1$, which establishes Eq.~\ref{eq:prop2}. At this index both conditions hold simultaneously, so $\Delta E(\beta^{(m^*)})<0$.
\end{proof}

Compared with SO-FALQON, which directly computes the optimal vertex $\beta^{\text{(SO)}}$, BLS-FALQON finds a heuristic coefficient $\beta^{(m^*)}$ that satisfies the descent condition but may deviate from the optimum. This may slightly reduce the energy drop per step and marginally increase the total circuit depth. However, by entirely eliminating the measurement of $B_k$ and $C_k$, the per-step measurement savings outweigh this overhead, as we verify numerically in the next section.

\section{\label{sec:simu}Numerical Simulations}

In this section, we evaluate the BLS-FALQON algorithm through numerical simulations. The following discussion introduces the definition of the max-cut problem, details the configuration of our simulation methods, and analyzes the numerical results.

\subsection{\label{sec:maxcut} Max-cut problem}

The max-cut problem is an NP-hard problem that serves as a common benchmark in previous study~\cite{magann2022lyapunov,arai2025scalable}. It is widely applied in network design, data clustering, circuit layout, and finance to optimize complex systems~\cite{kernighan1970efficient,barahona1988application,shi2000normalized,ding2001min,glover2022quantum}.

Given an undirected graph $G=(V,E)$ with $n=|V|$ vertices and edge set $E$, the max-cut problem seeks a partition of $V$ into two disjoint subsets $S$ and $\bar{S}$ that maximizes the number of edges crossing the partition. Following the standard quantum encoding~\cite{lucas2014ising}, this objective is mapped to the $n$-qubit problem Hamiltonian:
\begin{equation}
    \label{eq:Hp}
    H_p=-\frac{1}{2}\sum_{(i,j)\in E}(1-Z_iZ_j),
\end{equation}
where $Z_i$ denotes the Pauli-$Z$ operator acting on qubit $i$. The ground state of $H_p$ encodes the optimal partition, and the ground state energy $E_{\text{min}}={\langle H_p\rangle}_{\text{min}}$ equals the negative of the maximum cut value, which is determined via exhaustive calculation.

Finding the exact ground state of $H_p$ is computationally intractable in general, and practical algorithms seek approximate solutions. Solution quality is quantified by the approximation ratio
\begin{equation}
    r_A=\frac{\langle H_p\rangle}{E_{\min}},
\end{equation}
where $r_A=1$ corresponds to the exact optimum. 

The Goemans-Williamson (GW) algorithm~\cite{goemans1995improved} establishes a theoretical classical lower bound approximation ratio for max-cut problem. It achieves $r_{\text{GW}}=0.932$ for unweighted 3-regular graphs. We aim to demonstrate that BLS-FALQON exceeds this threshold with reduced circuit depth and total measurement bases count compared to prior feedback-based methods.

\subsection{\label{sec:method}Simulation settings}

We consider unweighted 3-regular graphs with $n\in\{8,10,12,14,16,18,20\}$ vertices. For $n=8$ and $n=10$, all available non-isomorphic instances (5 and 19, respectively) are included. For each remaining size, 50 non-isomorphic instances are generated at random. The problem Hamiltonian for each instance is constructed according to Eq.~\ref{eq:Hp}. To minimize the total measurement bases count, the Pauli strings are grouped into qubit-wise commuting subsets via the Welsh-Powell algorithm~\cite{welsh1967upper}. The resulting average number of measurement bases per step for each algorithm and graph size is reported in Table~\ref{tab:nbasis}.

\begin{table}[!htb]
\caption{Average number of measurement bases per step required by each algorithm across graph sizes $n\in\{8,10,\dots,20\}$. BLS-FALQON needs to estimate $E_k$ and $A_k$ in a normal step and $E_k$ solely during the triggered backtracking iteration. For $n\ge12$, the per-step measurement bases of BLS-FALQON (normal step) equal those of FALQON, since all $ZZ$ terms of $H_p$ are absorbed into the existing $A_k$ qubit-wise commuting subsets at no extra cost.}
\label{tab:nbasis}
\begin{tabular}{lcccccccc}
\toprule
\multirow{2}{*}{Algorithm} & \multirow{2}{*}{Need to estimate} & \multicolumn{7}{c}{Graph size $n$} \\
\cline{3-9}
 &  & 8 & 10 & 12 & 14 & 16 & 18 & 20 \\
\midrule
FALQON & $A_k$ & 3.2 & 3.05 & 3.4 & 3.32 & 3.46 & 3.42 & 3.42 \\
SO-FALQON & $A_k,B_k,C_k$ & 7.2 & 7.53 & 7.86 & 7.68 & 7.92 & 7.76 & 7.76 \\
BLS-FALQON (in a normal step) & $E_k,A_k$ & 3.4 & 3.16 & 3.4 & 3.32 & 3.46 & 3.42 & 3.42 \\
BLS-FALQON (in backtracking) & $E_k$ & 1 & 1 & 1 & 1 & 1 & 1 & 1 \\
\bottomrule
\end{tabular}
\end{table}

The scaling factor in the backtracking iteration is empirically fixed at $\tau=-1/4$ based on preliminary small-scale tests, chosen such that it yields a moderate geometric decay of the control coefficient per backtracking iteration without requiring an excessive number of trials to reach a valid value. The maximum number of circuit layers is $L=1000$ for all three algorithms. As prior studies have not provided the exact optimal step size for these dimensions, we finely discretize the step size into a candidate set $\{0.02,0.04,\dots,0.20\}$ to select the appropriate $\Delta t$. For FALQON and SO-FALQON, the adopted $\Delta t$ at each graph size is the largest value for which the approximation ratio increases monotonically without oscillation. BLS-FALQON uses the same $\Delta t$ as SO-FALQON at every graph size. The time step values for each algorithm and graph size are listed in Table~\ref{tab:dt}.

\begin{table}[!htb]
\caption{Time step $\Delta t$ adopted by each algorithm in the numerical simulations across graph sizes $n\in\{8,10,\dots,20\}$. BLS-FALQON utilizes the identical step size as SO-FALQON.}
\label{tab:dt}
\begin{tabular}{lccccccc}
\toprule
\multirow{2}{*}{Algorithm} & \multicolumn{7}{c}{Graph size $n$} \\
\cline{2-8}
 & 8 & 10 & 12 & 14 & 16 & 18 & 20 \\
\midrule
FALQON & 0.04 & 0.02 & 0.02 & 0.02 & 0.02 & 0.02 & 0.02 \\
SO-FALQON & 0.16 & 0.14 & 0.14 & 0.12 & 0.12 & 0.10 & 0.10 \\
BLS-FALQON & 0.16 & 0.14 & 0.14 & 0.12 & 0.12 & 0.10 & 0.10 \\
\bottomrule
\end{tabular}
\end{table}

All numerical simulations are implemented using the Qiskit quantum computing framework~\cite{javadi2024quantum} and executed on a Linux platform, equipped with 80 Intel(R) Xeon(R) Silver 4316 CPUs and an NVIDIA GeForce RTX 4090 GPU. Due to GPU memory limitations, \texttt{statevector} method cannot be performed at large graph scales. Therefore, all quantum circuits are implemented using \texttt{tensor\_network} method. Expectation values of all Pauli strings are estimated using the \texttt{BackendEstimatorV2()} primitive provided by Qiskit. Following the default configuration, each measurement basis is executed with $M=1024$ shots.

\subsection{\label{sec:result}Numerical results and analysis}

To quantitatively assess algorithm performance, we utilize two metrics. The first is the total measurement bases count $N_{\text{bases}}$, defined as the total number of measurement bases required for the approximation ratio to reach $r_{\text{GW}}=0.932$. For FALQON and SO-FALQON, $N_{\text{bases}}$ is the product of the step count and the per-step measurement bases count. For BLS-FALQON, the measurement bases consumed by each backtracking iteration are additionally included.

The second metric is $N_{\text{layers}}$, defined as the number of layers required to reach $r_{\text{GW}}$. Since the physical gate depth depends on the transpilation strategy and the connectivity topology of the target superconducting processor, we use $N_{\text{layers}}$ as a hardware-independent representation for circuit depth throughout this section. The physical gate depth is analyzed separately in Sec.~\ref{sec:real} in the context of real quantum hardware experiments. Both metrics are evaluated as averages over all instances at each graph size $n$. The results for $N_{\text{bases}}$ and $N_{\text{layers}}$ are presented in Fig.~\ref{fig:combined}(a) and Fig.~\ref{fig:combined}(b), respectively.

\begin{figure}
    \centering
    \includegraphics[width=.9\textwidth]{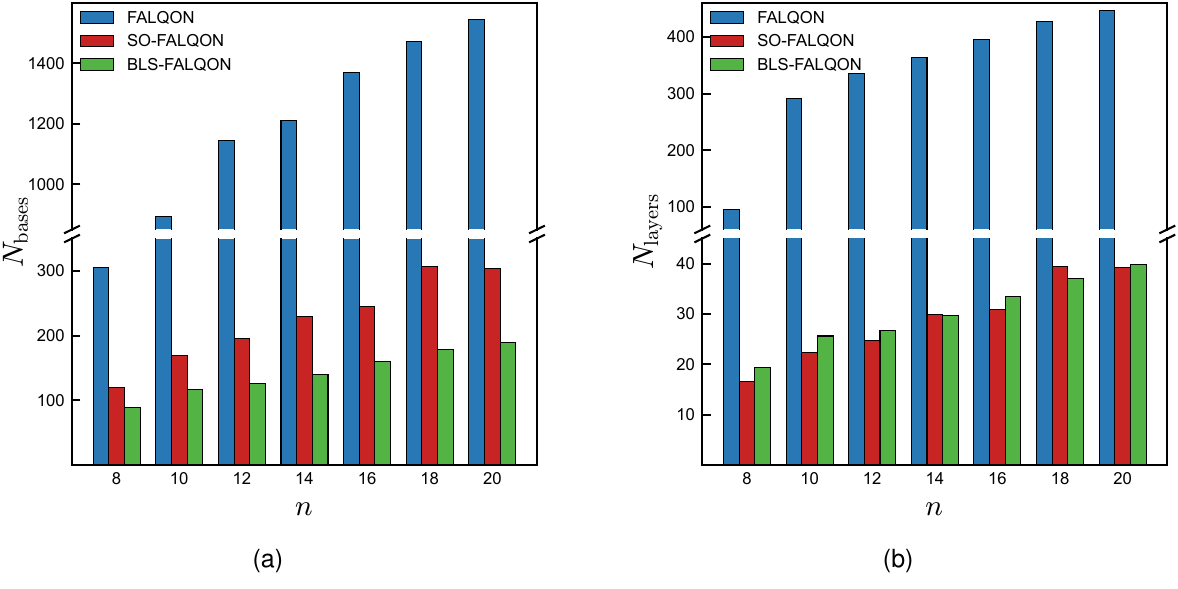}
    \caption{\label{fig:combined}
    Average performance metrics as a function of graph size $n$ for FALQON, SO-FALQON, and BLS-FALQON applied to unweighted 3-regular graphs, all required to reach the approximation ratio $r_{\text{GW}}=0.932$. (a) Total measurement bases count $N_{\text{bases}}$. For BLS-FALQON, the measurement bases consumed during backtracking iterations are included in $N_{\text{bases}}$. (b) Number of circuit layers $N_{\text{layers}}$. The vertical axis breaks indicate truncation for better visibility.
    }
\end{figure}

As shown in Fig.~\ref{fig:combined}(a), BLS-FALQON reduces $N_{\text{bases}}$ by approximately 37.7\% relative to SO-FALQON and 88.1\% relative to FALQON. The reduction over SO-FALQON originates directly from eliminating the per-layer measurements to estimate $B_k$ and $C_k$, as reported in Table~\ref{tab:nbasis}. The reduction in $N_{\text{bases}}$ remains consistent across all graph sizes, indicating that the measurement savings of BLS-FALQON are preserved as the problem scale increases.

Admittedly, Fig.~\ref{fig:combined}(b) shows that BLS-FALQON requires on average 2.5\% more circuit layers than SO-FALQON to reach $r_{\text{GW}}$, while achieving a reduction of approximately 11 $\times$ relative to FALQON. This phenomenon of the increase over SO-FALQON is consistent with the theoretical analysis in Proposition~\ref{prop:2}. The backtracking mechanism identifies a valid coefficient $\beta^{(m^*)}$ that satisfies the descent condition but does not coincide with the second-order optimal $\beta^{\text{(SO)}}$, may result in a marginally smaller energy decrease per layer, thereby slightly increasing the number of layers. Nevertheless, the average overhead of 2.5\% in $N_{\text{layers}}$ is negligible in the context of the 37.7\% reduction in $N_{\text{bases}}$. The circuit depth of BLS-FALQON is therefore considered comparable to that of SO-FALQON. Furthermore, the overhead in $N_{\text{layers}}$ does not grow appreciably with graph size, confirming that BLS-FALQON maintains comparable circuit depth scalability at larger problem scales.

Consequently, the numerical results demonstrate that BLS-FALQON achieves a substantial reduction in total measurement overhead relative to both FALQON and SO-FALQON, with circuit depth comparable to SO-FALQON and significantly lower than FALQON, while maintaining consistent, scalable convergence behavior.

\section{\label{sec:real}Real Quantum Hardware Experiments}

In this section, we validate BLS-FALQON on real quantum hardware. All experiments are executed on the \textit{Tianyan-176} quantum computer, which uses the \textit{zuchongzhi2} superconducting quantum processor~\cite{wu2021strong}. We benchmark BLS-FALQON against SO-FALQON and FALQON across $n\in\{4,6,8\}$ max-cut instances under realistic noise conditions. Sec.~\ref{sec:he} details the hardware specifications, and Sec.~\ref{sec:eraa} presents the experimental setup and comparative results.

\subsection{\label{sec:he}Hardware specifications}

The \textit{zuchongzhi2} quantum processor comprises 66 superconducting qubits arranged in a two-dimensional array with 110 tunable couplers. The qubit connectivity topology and the per-qubit and per-coupler calibration data are shown in Fig.~\ref{fig:topo}. The median calibration parameters during the experimental period are summarized in Table~\ref{tab:tianyan}. All quantum circuits are compiled to OpenQASM 2.0~\cite{cross2017open} prior to execution, and the processor automatically selects the physical qubits with the most favorable calibration metrics for each submitted circuit. For the three graph sizes $n\in\{4,6,8\}$, the physical gate depth per circuit layer after compilation is approximately 54, 65, and 87, respectively.

\begin{figure}
    \centering
    \includegraphics[width=.9\textwidth]{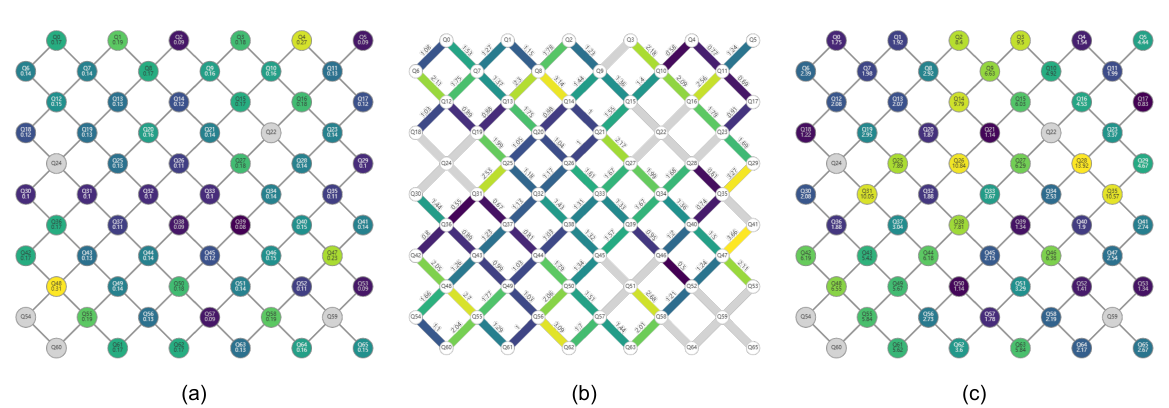}
    \caption{\label{fig:topo}
    Calibration data for the \textit{zuchongzhi2} superconducting quantum processor, retrieved on May 27, 2026. The node-and-edge layout in each panel reflects the qubit connectivity topology of the processor. (a) Single-qubit gate error rate per qubit. (b) Two-qubit CZ gate error rate per coupler. (c) Readout error rate per qubit (unit: \%).
    }
\end{figure}

\begin{table}[!htb]
\caption{Median calibration data for the \textit{zuchongzhi2} superconducting quantum processor, representative of the device performance during the experimental period.}
\label{tab:tianyan}
\begin{tabular}{l c}
\toprule
\textrm{Parameter} & \textrm{Value} \\
\midrule
Experimental period & May 1 -- May 27, 2026 \\
Basis gates & $R_Z(\theta),R_Y(\frac{\pi}{2}),R_Y(-\frac{\pi}{2}),CZ$ \\
Median T1 & 27.89 $\mu$s \\
Median T2 (CPMG) & 21.61 $\mu$s \\
Median 1Q error & 1.4$\times10^{-3}$ \\
Median 2Q error & 1.36$\times10^{-2}$ \\
Median readout error & 2.95$\times10^{-2}$ \\
\bottomrule
\end{tabular}
\end{table}

\subsection{\label{sec:eraa}Experimental results and analysis}

Due to the limited coherence time of the \textit{zuchongzhi2} processor, we restrict the experiments to small-scale max-cut instances. For each graph size $n\in\{4,6,8\}$, one unweighted 3-regular graph instance is selected, whose structure is illustrated in Fig.~\ref{fig:graph}. Following the same step-size selection principle as Sec.~\ref{sec:method}, the adopted $\Delta t$ for each algorithm and graph size is listed in Table~\ref{tab:dtt}. To mitigate stochasticity between runs, each of the three algorithms is executed 20 times on real hardware with a maximum of $L=10$ circuit layers. The scaling factor in hardware experiments is also fixed at $\tau=-1/4$. To isolate the effect of hardware noise on the feedback law from its effect on state preparation, we adopt a hybrid evaluation protocol. Specifically, all expectation values required to determine $\beta_k$ at each layer, including $E_k$ for the backtracking mechanism in BLS-FALQON, are estimated on real quantum hardware, while the approximation ratio $r_A$ is evaluated by substituting the $\beta_k$ sequences obtained from hardware measurements into an exact statevector simulation. This protocol quantifies directly how faithfully the feedback law, as realized under hardware noise, reproduces the ideal control trajectory.

\begin{figure}
    \centering
    \includegraphics[width=.5\textwidth]{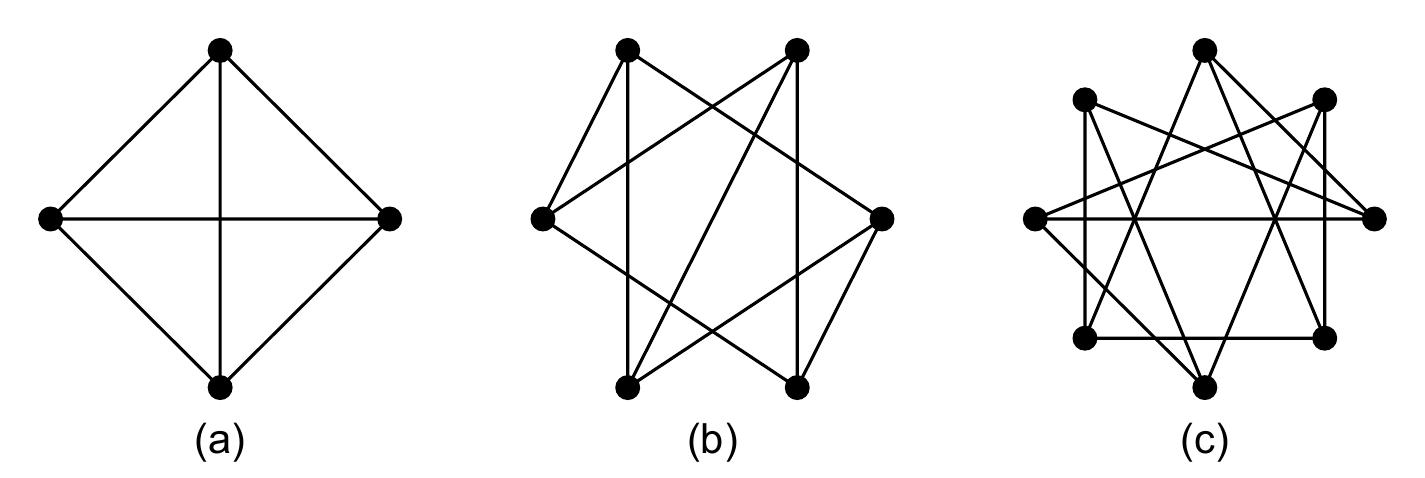}
    \caption{\label{fig:graph}
    Unweighted 3-regular graph instances used in the real hardware experiments for (a) $n=4$, (b) $n=6$, and (c) $n=8$.
    }
\end{figure}

\begin{table}[!htb]
\caption{Time step $\Delta t$ adopted by each algorithm in the real hardware experiments across graph sizes $n\in\{4,6,8\}$. BLS-FALQON utilizes the identical step size as SO-FALQON.}
\label{tab:dtt}
\begin{tabular}{lccc}
\toprule
\textrm{Algorithm} & \textrm{$n=4$} & \textrm{$n=6$} & \textrm{$n=8$} \\
\midrule
FALQON & 0.10 & 0.08 & 0.04 \\
SO-FALQON & 0.20 & 0.16 & 0.16 \\
BLS-FALQON & 0.20 & 0.16 & 0.16 \\
\bottomrule
\end{tabular}
\end{table}

The estimated hardware execution times are summarized in Table~\ref{tab:last}. Following the existing calculation method in Ref.~\cite{guerreschi2019qaoa}, the total time for a single circuit repetition ($T_{run}$) is divided into three sequential parts: initial state preparation ($T_P$), quantum circuit execution ($T_C$), and qubit measurement ($T_M$). The total execution time is formulated as follows:
\begin{equation}
    T_{run}=T_P+T_C+T_M,
\end{equation}
where $T_C=L\cdot d\cdot T_G$.

Here, $L$ denotes the number of circuit layers, $d$ represents the physical gate depth per layer, and $T_G$ is the execution time of a single quantum gate. According to Ref.~\cite{guerreschi2019qaoa}, the combined state preparation and measurement time is $T_P+T_M\approx1$ $\mu$s, while the individual depth time is $T_G\approx10$ ns. Furthermore, the number of shots for each Hamiltonian measurement on the physical hardware is uniformly set to 1024.

\begin{table}[!htb]
    \caption{Average estimated total hardware execution time over $L = 10$ circuit layers across 20 independent hardware runs.}
    \label{tab:last}
    \begin{tabular}{lccc}
    \toprule
    \textrm{Algorithm} & \textrm{$n=4$} & \textrm{$n=6$} & \textrm{$n=8$} \\
    \midrule
    FALQON & 162.61 ms & 140.54 ms & 177.72 ms \\
    SO-FALQON & 365.88 ms & 327.94 ms & 414.67 ms \\
    BLS-FALQON & 206.11 ms & 188.39 ms & 237.35 ms \\
    \bottomrule
    \end{tabular}
\end{table}
    
As observed in Table~\ref{tab:last}, the execution time for $n=6$ is shorter than that for $n=4$. This is because, after applying the Welsh-Powell algorithm for measurement grouping, evaluating $A_k$ in FALQON and $A_k,E_k$ in BLS-FALQON requires measuring only 3 Pauli strings for the $n=6$ instances, whereas it requires 4 strings for $n=4$. Similarly, the grouped observables $A_k,B_k,C_k$ for SO-FALQON necessitate measuring 7 Pauli strings for $n=6$, compared to 9 strings for $n=4$. Overall, the results indicate that BLS-FALQON requires less execution time than SO-FALQON. This reduction is directly attributable to the decreased number of required measurement bases. Ultimately, this experimentally verifies that measurement efficiency can effectively lower the overall time complexity of the algorithm on real quantum hardware.

The approximation ratio results are presented in Fig.~\ref{fig:real}. In each panel, the black curve shows the ideal statevector simulation, the colored curve shows the average $r_A$ over 20 independent hardware runs, and the shaded band indicates the corresponding range across all runs. BLS-FALQON attains the highest average approximation ratio among the three algorithms at all graph sizes and exhibits the narrowest shaded band at $n\in\{4,6\}$, indicating a stable convergence behavior under realistic hardware noise at small-scale instances.

\begin{figure}
    \centering
    \includegraphics[width=.9\textwidth]{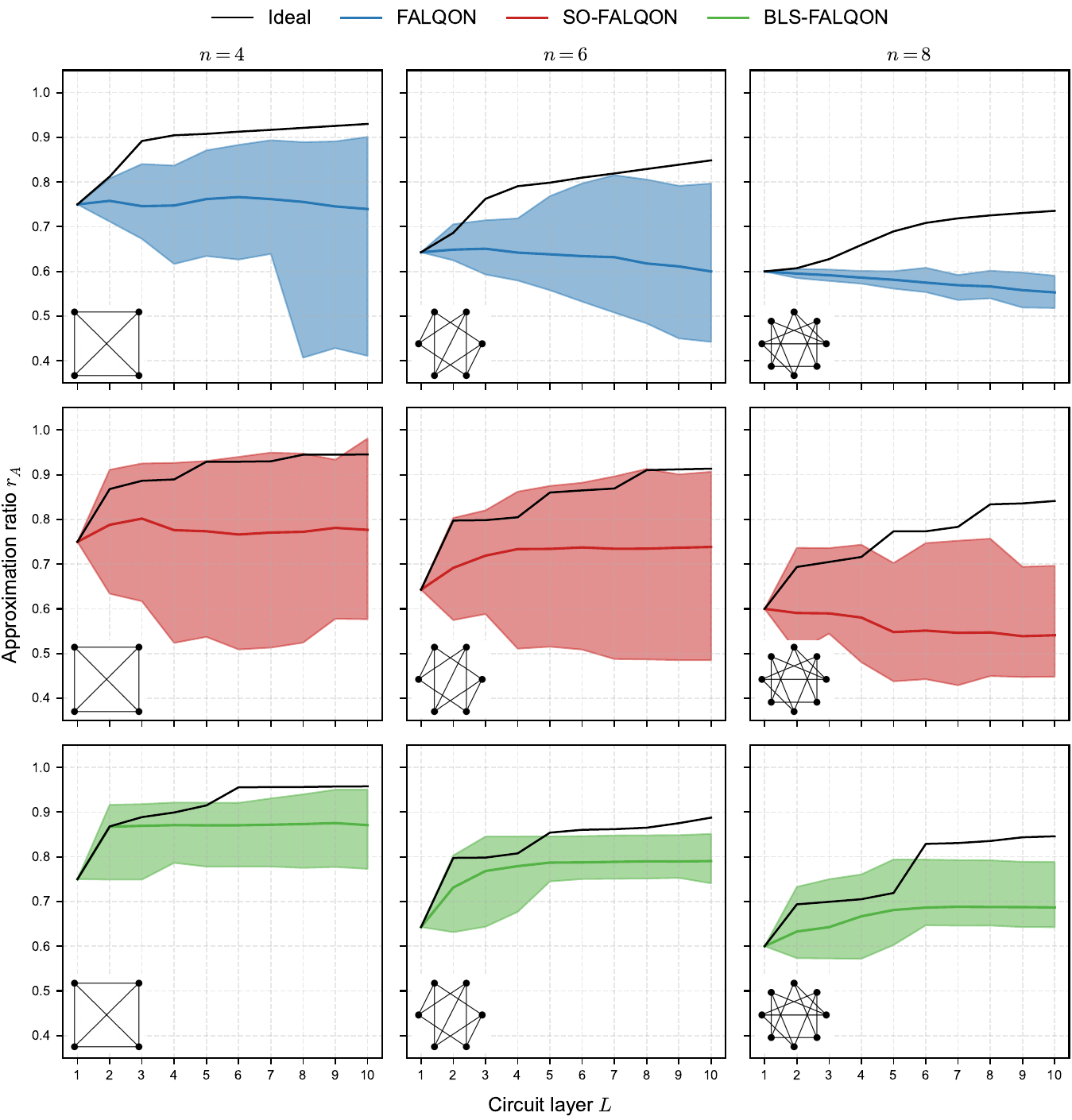}
    \caption{\label{fig:real}
    The approximation ratio results on the \textit{zuchongzhi2} superconducting quantum processor for unweighted 3-regular graph instances over $L=10$ circuit layers. Rows correspond to FALQON, SO-FALQON and BLS-FALQON from top to bottom, and columns correspond to $n=4,6,8$ from left to right. Black curves show the ideal statevector simulation. Colored curves show the average approximation ratio $r_A$ over 20 independent runs. Shaded bands indicate the range of $r_A$ across all runs.
    }
\end{figure}

FALQON operates at smaller time steps. Its average $r_A$ remains near the initial value at $n=4$ and $n=6$, and even decreases with circuit depth at $n=8$. The shaded band of FALQON narrows as $n$ increases, which is a direct consequence of the smaller time step adopted at larger graph sizes. However, this reduced step size does not translate into improved convergence under hardware noise. It merely confines the feedback coefficient $\beta_k^{\text{(FO)}}$ to a smaller range, suppressing improved convergence.

SO-FALQON uses the same large time step as BLS-FALQON but requires estimating $B_k$ and $C_k$ additionally at every layer. The resulting increase in the number of measurement bases amplifies the hardware noise in $\beta_k^{\text{(SO)}}$ at each step, which manifests as the widest shaded band among the three algorithms.

By contrast, BLS-FALQON only requires estimating $A_k$ and $E_k$ in each normal step, reducing the hardware noise contribution to $\beta_k^{\text{(FO)}}$. Furthermore, the backtracking mechanism provides a corrective effect beyond its original role of suppressing oscillations induced by Trotter decomposition. When hardware noise causes a spurious increase in $E_k$, the alternating sign scaling of $\beta_k^{\text{(FO)}}$ identifies a valid descent coefficient without requiring any additional Hamiltonian measurements, partially compensating for misdirection of the control direction caused by hardware noise.

Admittedly, we note that even BLS-FALQON exhibits a visible gap between the hardware runs and the ideal simulation, and this gap widens with increasing problem size. Across all graph sizes, the average $r_A$ shows little improvement beyond layer 5, indicating that noise accumulation over successive layers eventually saturates the useful feedback signal before achieving $r_{\text{GW}}$. These observations reflect the fundamental constraints of current NISQ devices rather than algorithmic limitations. We expect that improvements in qubit coherence times and gate fidelities will substantially narrow this gap and allow the full benefit of BLS-FALQON to be realized at larger problem scales.

\section{\label{sec:conclu}Conclusions}

In this work, we propose BLS-FALQON to achieve both low circuit depth and reduced measurement overhead compared to FALQON, as our method incorporates the backtracking line search strategy to dynamically adjust the control coefficient. Numerical simulations prove that BLS-FALQON successfully outperforms FALQON in both circuit depth and measurement efficiency, while it further reduces total measurement count by 37.7\% relative to SO-FALQON and maintaining a comparable circuit depth. Real quantum hardware experiments demonstrate that our proposed method decreases the practical execution time on NISQ devices, and remains robust and effective under realistic hardware noise at small problem scales.

For the joint estimation of multiple low-locality Pauli strings, an alternative to Welsh-Powell grouping is the classical shadows protocol~\cite{james2001measurement,aaronson2018shadow,huang2020predicting}, as demonstrated in a prior work~\cite{bertuzzi2025shadow}. This does not affect the contribution of our paper, since our contribution lies in reducing the number of Hamiltonians that need to be estimated, independent of the specific approach used to estimate each of them. In this paper, we adopt the same Welsh-Powell grouping as SO-FALQON to implement this reduction, but the measurement scheme could instead be implemented using classical shadows, and the conclusion of this paper would remain consistent with either choice. Whether Welsh-Powell grouping or classical shadows offers a greater practical advantage in solving combinatorial optimization problems is left as a direction for future work.

\section{Acknowledgements}

This work is supported by National Natural Science Foundation of China (Grant Nos. U25B2014, 62371069, 62372048, 62272056,), and the National Key Laboratory of Security Communication Foundation (2025, 6142103042503). This work is partly supported by Tianyan Quantum Computing Cloud Platform developed by China Telecom Quantum Information Technology Group Co., Ltd.

\printcredits

\bibliographystyle{elsarticle-num-names}

\bibliography{bls_falqon_huang}

@article{magann2022feedback,
  title={Feedback-based quantum optimization},
  author={Magann, Alicia B and Rudinger, Kenneth M and Grace, Matthew D and Sarovar, Mohan},
  journal={Phys. Rev. Lett.},
  volume={129},
  number={25},
  pages={250502},
  year={2022},
  publisher={APS},
  doi={https://doi.org/10.1103/PhysRevLett.129.250502}
}

@article{magann2022lyapunov,
  title={Lyapunov-control-inspired strategies for quantum combinatorial optimization},
  author={Magann, Alicia B and Rudinger, Kenneth M and Grace, Matthew D and Sarovar, Mohan},
  journal={Phys. Rev. A},
  volume={106},
  number={6},
  pages={062414},
  year={2022},
  publisher={APS},
  doi={https://doi.org/10.1103/PhysRevA.106.062414}
}

@article{kosloff1992excitation,
  title={Excitation without demolition: Radiative excitation of ground-surface vibration by impulsive stimulated Raman scattering with damage control},
  author={Kosloff, Ronnie and Hammerich, Audrey Dell and Tannor, David},
  journal={Phys. Rev. Lett.},
  volume={69},
  number={15},
  pages={2172},
  year={1992},
  publisher={APS},
  doi={https://doi.org/10.1103/PhysRevLett.69.2172}
}

@article{ohtsuki1998application,
  title={Application of a locally optimized control theory to pump--dump laser-driven chemical reactions},
  author={Ohtsuki, Y and Yahata, Y and Kono, H and Fujimura, Y},
  journal={Chem. Phys. Lett.},
  volume={287},
  number={5-6},
  pages={627--631},
  year={1998},
  publisher={Elsevier},
  doi={https://doi.org/10.1016/S0009-2614(98)00224-3}
}

@article{j1999laser,
  title={Laser cooling of internal degrees of freedom of molecules by dynamically trapped states},
  author={J. Tannor, David and Kosloff, Ronnie and Bartana, Alon},
  journal={Faraday Discuss.},
  volume={113},
  pages={365--383},
  year={1999},
  publisher={The Royal Society of Chemistry},
  doi={https://doi.org/10.1039/a902103e}
}

@inproceedings{grivopoulos2003lyapunov,
  title={Lyapunov-based control of quantum systems},
  author={Grivopoulos, Symeon and Bamieh, Bassam},
  booktitle={42nd IEEE International Conference on Decision and Control (IEEE Cat. No. 03CH37475)},
  volume={1},
  pages={434--438},
  year={2003},
  organization={IEEE},
  doi={https://doi.org/10.1109/CDC.2003.1272601}
}

@article{engel2009local,
  title={Local control theory: Recent applications to energy and particle transfer processes in molecules},
  author={Engel, Volker and Meier, Christoph and Tannor, David J},
  journal={Adv. Chem. Phys.},
  volume={141},
  pages={29},
  year={2009},
  doi={https://doi.org/10.1002/9780470431917.ch2}
}

@article{peruzzo2014variational,
  title={A variational eigenvalue solver on a photonic quantum processor},
  author={Peruzzo, Alberto and McClean, Jarrod and Shadbolt, Peter and Yung, Man-Hong and Zhou, Xiao-Qi and Love, Peter J and Aspuru-Guzik, Al{\'a}n and O’brien, Jeremy L},
  journal={Nat. Commun.},
  volume={5},
  number={1},
  pages={4213},
  year={2014},
  publisher={Nature Publishing Group UK London},
  doi={https://doi.org/10.1038/ncomms5213}
}

@article{farhi2014quantum,
  title={A quantum approximate optimization algorithm},
  author={Farhi, Edward and Goldstone, Jeffrey and Gutmann, Sam},
  journal={arXiv preprint arXiv:1411.4028},
  year={2014},
  doi={https://doi.org/10.48550/arXiv.1411.4028}
}

@article{liu2021variational,
  title={Variational quantum algorithm for the Poisson equation},
  author={Liu, Hai-Ling and Wu, Yu-Sen and Wan, Lin-Chun and Pan, Shi-Jie and Qin, Su-Juan and Gao, Fei and Wen, Qiao-Yan},
  journal={Phys. Rev. A},
  volume={104},
  number={2},
  pages={022418},
  year={2021},
  publisher={APS},
  doi={https://doi.org/10.1103/PhysRevA.104.022418}
}

@article{bravo2023variational,
  title={Variational quantum linear solver},
  author={Bravo-Prieto, Carlos and LaRose, Ryan and Cerezo, Marco and Subasi, Yigit and Cincio, Lukasz and Coles, Patrick J},
  journal={Quantum},
  volume={7},
  pages={1188},
  year={2023},
  publisher={Verein zur F{\"o}rderung des Open Access Publizierens in den Quantenwissenschaften},
  doi={https://doi.org/10.22331/q-2023-11-22-1188}
}

@article{cerezo2021variational,
  title={Variational quantum algorithms},
  author={Cerezo, Marco and Arrasmith, Andrew and Babbush, Ryan and Benjamin, Simon C and Endo, Suguru and Fujii, Keisuke and McClean, Jarrod R and Mitarai, Kosuke and Yuan, Xiao and Cincio, Lukasz and others},
  journal={Nat. Rev. Phys.},
  volume={3},
  number={9},
  pages={625--644},
  year={2021},
  publisher={Nature Publishing Group UK London},
  doi={https://doi.org/10.1038/s42254-021-00348-9}
}

@article{bittel2021training,
  title={Training variational quantum algorithms is NP-hard},
  author={Bittel, Lennart and Kliesch, Martin},
  journal={Phys. Rev. Lett.},
  volume={127},
  number={12},
  pages={120502},
  year={2021},
  publisher={APS},
  doi={https://doi.org/10.1103/PhysRevLett.127.120502}
}

@article{zhou2020quantum,
  title={Quantum approximate optimization algorithm: Performance, mechanism, and implementation on near-term devices},
  author={Zhou, Leo and Wang, Sheng-Tao and Choi, Soonwon and Pichler, Hannes and Lukin, Mikhail D},
  journal={Phys. Rev. X},
  volume={10},
  number={2},
  pages={021067},
  year={2020},
  publisher={APS},
  doi={https://doi.org/10.1103/PhysRevX.10.021067}
}

@article{harrigan2021quantum,
  title={Quantum approximate optimization of non-planar graph problems on a planar superconducting processor},
  author={Harrigan, Matthew P and Sung, Kevin J and Neeley, Matthew and Satzinger, Kevin J and Arute, Frank and Arya, Kunal and Atalaya, Juan and Bardin, Joseph C and Barends, Rami and Boixo, Sergio and others},
  journal={Nat. Phys.},
  volume={17},
  number={3},
  pages={332--336},
  year={2021},
  publisher={Nature Publishing Group UK London},
  doi={https://doi.org/10.1038/s41567-020-01105-y}
}

@article{kruger2025out,
  title={Out of the Loop: Structural Approximation of Optimisation Landscapes and non-Iterative Quantum Optimisation},
  author={Kr{\"u}ger, Tom and Mauerer, Wolfgang},
  journal={Quantum},
  volume={9},
  pages={1903},
  year={2025},
  publisher={Verein zur F{\"o}rderung des Open Access Publizierens in den Quantenwissenschaften},
  doi={https://doi.org/10.22331/q-2025-11-06-1903}
}

@article{regev2025efficient,
  title={An efficient quantum factoring algorithm},
  author={Regev, Oded},
  journal={J. ACM},
  volume={72},
  number={1},
  pages={1--13},
  year={2025},
  publisher={ACM New York, NY},
  doi={https://doi.org/10.1145/3708471}
}

@inproceedings{ragavan2024space,
  title={Space-efficient and noise-robust quantum factoring},
  author={Ragavan, Seyoon and Vaikuntanathan, Vinod},
  booktitle={Annual International Cryptology Conference},
  pages={107--140},
  year={2024},
  organization={Springer},
  doi={https://doi.org/10.1007/978-3-031-68391-6_4}
}

@article{xiao2026offline,
  title={Offline Dedicated Quantum Attacks on Block Cipher Constructions Based on Two Parallel Permutation-Based Pseudorandom Functions},
  author={Zhen, XiaoFan and Li, Zhenqiang and Fan, Jiacheng and Qin, Sujuan and Gao, Fei},
  journal={Sci. China Inf. Sci.},
  volume={69},
  number={8},
  pages={180507},
  year={2026},
  publisher={Science China Press},
  doi={https://doi.org/10.1007/s11432-026-5030-8}
}

@article{preskill2018quantum,
  title={Quantum computing in the NISQ era and beyond},
  author={Preskill, John},
  journal={Quantum},
  volume={2},
  pages={79},
  year={2018},
  publisher={Verein zur F{\"o}rderung des Open Access Publizierens in den Quantenwissenschaften},
  doi={https://doi.org/10.22331/q-2018-08-06-79}
}

@article{li2025efficient,
  title={An efficient quantum proactive incremental learning algorithm},
  author={Li, Lingxiao and Li, Jing and Song, Yanqi and Qin, Sujuan and Wen, Qiaoyan and Gao, Fei},
  journal={Sci. China Phys. Mech. Astron.},
  volume={68},
  number={1},
  pages={210313},
  year={2025},
  publisher={Springer},
  doi={https://doi.org/10.1007/s11433-024-2501-4}
}

@article{su2025topology,
  title={Topology-driven quantum architecture search framework},
  author={Su, Junjian and Fan, Jiacheng and Wu, Shengyao and Li, Guanghui and Qin, Sujuan and Gao, Fei},
  journal={Sci. China Inf. Sci.},
  volume={68},
  number={8},
  pages={180507},
  year={2025},
  publisher={Springer},
  doi={https://doi.org/10.1007/s11432-024-4486-x}
}

@article{arai2025scalable,
  title={Scalable circuit depth reduction in feedback-based quantum optimization with a quadratic approximation},
  author={Arai, Don and Okada, Ken N and Nakano, Yuichiro and Mitarai, Kosuke and Fujii, Keisuke},
  journal={Phys. Rev. Res.},
  volume={7},
  number={1},
  pages={013035},
  year={2025},
  publisher={APS},
  doi={https://doi.org/10.1103/PhysRevResearch.7.013035}
}

@article{malla2024feedback,
  title={Feedback-based quantum algorithm inspired by counterdiabatic driving},
  author={Malla, Rajesh K and Sukeno, Hiroki and Yu, Hongye and Wei, Tzu-Chieh and Weichselbaum, Andreas and Konik, Robert M},
  journal={Phys. Rev. Res.},
  volume={6},
  number={4},
  pages={043068},
  year={2024},
  publisher={APS},
  doi={https://doi.org/10.1103/PhysRevResearch.6.043068}
}

@article{tang2025nonvariational,
  title={Nonvariational ADAPT algorithm for quantum simulations},
  author={Tang, Ho Lun and Chen, Yanzhu and Biswas, Prakriti and Magann, Alicia B and Arenz, Christian and Economou, Sophia E},
  journal={Phys. Rev. Res.},
  volume={7},
  number={2},
  pages={023275},
  year={2025},
  publisher={APS},
  doi={https://doi.org/10.1103/PhysRevResearch.7.023275}
}

@article{brady2025feedback,
  title={Feedback-based optimally controlled quantum states},
  author={Brady, Lucas T and Hadfield, Stuart},
  journal={Phys. Rev. A},
  volume={111},
  number={6},
  pages={062406},
  year={2025},
  publisher={APS},
  doi={https://doi.org/10.1103/PhysRevA.111.062406}
}

@article{rattighieri2025accelerating,
  title={Accelerating feedback-based quantum algorithms through time rescaling},
  author={Rattighieri, LAM and Pexe, GEL and Bernardo, BL and Fanchini, FF},
  journal={Phys. Rev. A},
  volume={112},
  number={4},
  pages={042607},
  year={2025},
  publisher={APS},
  doi={https://doi.org/10.1103/PhysRevA.112.042607}
}

@article{bertuzzi2025shadow,
  title={Shadow measurements for feedback-based quantum optimization},
  author={Bertuzzi, Let{\'\i}cia and Engster, Jo{\~a}o P and da Rosa, Evandro CR and Duzzioni, Eduardo I},
  journal={Physical Review A},
  volume={112},
  number={2},
  pages={022419},
  year={2025},
  publisher={APS},
  doi={https://doi.org/10.1103/PhysRevA.112.022419}
}

@article{james2001measurement,
  title={Measurement of qubits},
  author={James, Daniel FV and Kwiat, Paul G and Munro, William J and White, Andrew G},
  journal={Phys. Rev. A},
  volume={64},
  number={5},
  pages={052312},
  year={2001},
  publisher={APS},
  doi={https://doi.org/10.1103/PhysRevA.64.052312}
}

@inproceedings{aaronson2018shadow,
  title={Shadow tomography of quantum states},
  author={Aaronson, Scott},
  booktitle={Proceedings of the 50th annual ACM SIGACT symposium on theory of computing},
  pages={325--338},
  year={2018},
  doi={https://doi.org/10.1145/3188745.3188802}
}

@article{huang2020predicting,
  title={Predicting many properties of a quantum system from very few measurements},
  author={Huang, Hsin-Yuan and Kueng, Richard and Preskill, John},
  journal={Nat. Phys.},
  volume={16},
  number={10},
  pages={1050--1057},
  year={2020},
  publisher={Nature Publishing Group UK London},
  doi={https://doi.org/10.1038/s41567-020-0932-7}
}

@article{armijo1966minimization,
  title={Minimization of functions having Lipschitz continuous first partial derivatives},
  author={Armijo, Larry},
  journal={Pac. J. Math.},
  volume={16},
  number={1},
  pages={1--3},
  year={1966},
  publisher={Mathematical Sciences Publishers},
  doi={https://doi.org/10.2140/pjm.1966.16.1}
}

@article{bertsekas1997nonlinear,
  title={Nonlinear programming},
  author={Bertsekas, Dimitri P},
  journal={J. Oper. Res. Soc.},
  volume={48},
  number={3},
  pages={334--334},
  year={1997},
  publisher={Taylor \& Francis},
  doi={https://doi.org/10.1057/palgrave.jors.2600425}
}

@book{nocedal2006numerical,
  title={Numerical optimization},
  author={Nocedal, Jorge and Wright, Stephen J},
  year={2006},
  publisher={Springer},
  doi={https://doi.org/10.1007/978-0-387-40065-5}
}

@article{ni2024multilevel,
  title={Multilevel leapfrogging initialization strategy for quantum approximate optimization algorithm},
  author={Ni, Xiao-Hui and Cai, Bin-Bin and Liu, Hai-Ling and Qin, Su-Juan and Gao, Fei and Wen, Qiao-Yan},
  journal={Adv. Quantum Technol.},
  volume={7},
  number={5},
  pages={2300419},
  year={2024},
  publisher={Wiley Online Library},
  doi={https://doi.org/10.1002/qute.202300419}
}

@article{ni2025progressive,
  title={Progressive quantum algorithm for maximum independent set with quantum alternating operator ansatz},
  author={Ni, Xiao-Hui and Li, Ling-Xiao and Song, Yan-Qi and Jin, Zheng-Ping and Qin, Su-Juan and Gao, Fei},
  journal={Chin. Phys. B},
  volume={34},
  number={7},
  pages={070304},
  year={2025},
  publisher={Chinese Physical Society and IOP Publishing Ltd},
  doi={https://doi.org/10.1088/1674-1056/addd83}
}

@article{ni2026adaptive,
  title={An Adaptive Mixer Allocation Strategy for the Quantum Alternating Operator Ansatz},
  author={Ni, Xiao-Hui and Wu, Yu-Sen and Cai, Bin-Bin and Li, Wen-Min and Qin, Su-Juan and Gao, Fei},
  journal={Adv. Quantum Technol.},
  volume={9},
  number={3},
  pages={e00487},
  year={2026},
  publisher={Wiley Online Library},
  doi={https://doi.org/10.1002/qute.202500487}
}

@article{huang2026iterative,
  title={Iterative partition-search variational quantum algorithm for solving the shortest-vector problem},
  author={Huang, Zi-Wen and Ni, Xiao-Hui and Fan, Jia-Cheng and Qin, Su-Juan and Huang, Wei and Xu, Bing-Jie and Gao, Fei},
  journal={Phys. Rev. A},
  volume={113},
  number={3},
  pages={032601},
  year={2026},
  publisher={APS},
  doi={https://doi.org/10.1103/PhysRevA.113.032601}
}

@article{wu2021strong,
  title={Strong quantum computational advantage using a superconducting quantum processor},
  author={Wu, Yulin and Bao, Wan-Su and Cao, Sirui and Chen, Fusheng and Chen, Ming-Cheng and Chen, Xiawei and Chung, Tung-Hsun and Deng, Hui and Du, Yajie and Fan, Daojin and others},
  journal={Phys. Rev. Lett.},
  volume={127},
  number={18},
  pages={180501},
  year={2021},
  publisher={APS},
  doi={https://doi.org/10.1103/PhysRevLett.127.180501}
}

@article{zhu2022quantum,
  title={Quantum computational advantage via 60-qubit 24-cycle random circuit sampling},
  author={Zhu, Qingling and Cao, Sirui and Chen, Fusheng and Chen, Ming-Cheng and Chen, Xiawei and Chung, Tung-Hsun and Deng, Hui and Du, Yajie and Fan, Daojin and Gong, Ming and others},
  journal={Sci. Bull.},
  volume={67},
  number={3},
  pages={240--245},
  year={2022},
  publisher={Elsevier},
  doi={https://doi.org/10.1016/j.scib.2021.10.017}
}

@article{cao2023generation,
  title={Generation of genuine entanglement up to 51 superconducting qubits},
  author={Cao, Sirui and Wu, Bujiao and Chen, Fusheng and Gong, Ming and Wu, Yulin and Ye, Yangsen and Zha, Chen and Qian, Haoran and Ying, Chong and Guo, Shaojun and others},
  journal={Nature},
  volume={619},
  number={7971},
  pages={738--742},
  year={2023},
  publisher={Nature Publishing Group UK London},
  doi={https://doi.org/10.1038/s41586-023-06195-1}
}

@article{qian2025programmable,
  title={Programmable higher-order nonequilibrium topological phases on a superconducting quantum processor},
  author={Qian, Haoran and Gong, Ming and Zhang, Jiahui and Guo, Shaojun and Zha, Chen and Chen, Fusheng and Ye, Yangsen and Wu, Yulin and Cao, Sirui and Ying, Chong and others},
  journal={Science},
  volume={390},
  number={6776},
  pages={930--934},
  year={2025},
  publisher={American Association for the Advancement of Science},
  doi={https://doi.org/10.1126/science.adp6802}
}

@article{welsh1967upper,
  title={An upper bound for the chromatic number of a graph and its application to timetabling problems},
  author={Welsh, Dominic JA and Powell, Martin B},
  journal={Comput. J.},
  volume={10},
  number={1},
  pages={85--86},
  year={1967},
  publisher={Oxford University Press},
  doi={https://doi.org/10.1093/comjnl/10.1.85}
}

@book{rossmann2006lie,
  title={Lie groups: an introduction through linear groups},
  author={Rossmann, Wulf},
  volume={5},
  year={2006},
  publisher={OUP Oxford},
  doi={https://doi.org/10.1093/oso/9780198596837.001.0001}
}

@book{brian2003lie,
  title={Lie groups, Lie algebras, and representations: an elementary introduction},
  author={Brian, C Hall},
  year={2003},
  publisher={Springer},
  doi={https://doi.org/10.1007/978-0-387-21554-9}
}

@article{kernighan1970efficient,
  title={An efficient heuristic procedure for partitioning graphs},
  author={Kernighan, Brian W and Lin, Shen},
  journal={Bell Syst. Tech. J.},
  volume={49},
  number={2},
  pages={291--307},
  year={1970},
  publisher={Nokia Bell Labs},
  doi={https://doi.org/10.1002/j.1538-7305.1970.tb01770.x}
}

@article{barahona1988application,
  title={An application of combinatorial optimization to statistical physics and circuit layout design},
  author={Barahona, Francisco and Gr{\"o}tschel, Martin and J{\"u}nger, Michael and Reinelt, Gerhard},
  journal={Oper. Res.},
  volume={36},
  number={3},
  pages={493--513},
  year={1988},
  publisher={INFORMS},
  doi={https://doi.org/10.1287/opre.36.3.493}
}

@article{shi2000normalized,
  title={Normalized cuts and image segmentation},
  author={Shi, Jianbo and Malik, Jitendra},
  journal={IEEE Trans. Pattern Anal. Mach. Intell.},
  volume={22},
  number={8},
  pages={888--905},
  year={2000},
  publisher={Ieee},
  doi={https://doi.org/10.1109/34.868688}
}

@inproceedings{ding2001min,
  title={A min-max cut algorithm for graph partitioning and data clustering},
  author={Ding, Chris HQ and He, Xiaofeng and Zha, Hongyuan and Gu, Ming and Simon, Horst D},
  booktitle={Proceedings 2001 IEEE international conference on data mining},
  pages={107--114},
  year={2001},
  organization={IEEE},
  doi={https://doi.org/10.1109/ICDM.2001.989507}
}

@article{glover2022quantum,
  title={Quantum bridge analytics I: a tutorial on formulating and using QUBO models},
  author={Glover, Fred and Kochenberger, Gary and Hennig, Rick and Du, Yu},
  journal={Ann. Oper. Res.},
  volume={314},
  number={1},
  pages={141--183},
  year={2022},
  publisher={Springer},
  doi={https://doi.org/10.1007/s10479-022-04634-2}
}

@article{lucas2014ising,
  title={Ising formulations of many NP problems},
  author={Lucas, Andrew},
  journal={Front. Phys.},
  volume={2},
  pages={74887},
  year={2014},
  publisher={Frontiers},
  doi={https://doi.org/10.3389/fphy.2014.00005}
}

@article{goemans1995improved,
  title={Improved approximation algorithms for maximum cut and satisfiability problems using semidefinite programming},
  author={Goemans, Michel X and Williamson, David P},
  journal={J. ACM},
  volume={42},
  number={6},
  pages={1115--1145},
  year={1995},
  publisher={ACM New York, NY, USA},
  doi={https://doi.org/10.1145/227683.227684}
}

@article{javadi2024quantum,
  title={Quantum computing with Qiskit},
  author={Javadi-Abhari, Ali and Treinish, Matthew and Krsulich, Kevin and Wood, Christopher J and Lishman, Jake and Gacon, Julien and Martiel, Simon and Nation, Paul D and Bishop, Lev S and Cross, Andrew W and others},
  journal={arXiv preprint arXiv:2405.08810},
  year={2024},
  doi={https://doi.org/10.48550/arXiv.2405.08810}
}

@article{cross2017open,
  title={Open quantum assembly language},
  author={Cross, Andrew W and Bishop, Lev S and Smolin, John A and Gambetta, Jay M},
  journal={arXiv preprint arXiv:1707.03429},
  year={2017},
  doi={https://doi.org/10.48550/arXiv.1707.03429}
}

@article{guerreschi2019qaoa,
  title={QAOA for Max-Cut requires hundreds of qubits for quantum speed-up},
  author={Guerreschi, Gian Giacomo and Matsuura, Anne Y},
  journal={Sci. Rep.},
  volume={9},
  number={1},
  pages={6903},
  year={2019},
  publisher={Nature Publishing Group UK London},
  doi={https://doi.org/10.1038/s41598-019-43176-9}
}

\end{document}